\newtheorem*{theorem*}{Theorem}
\newtheoremstyle{ddtheoremstyle} 
    {\topsep}                    
    {\topsep}                    
    {}                           
    {}                           
    {\scshape}                   
    {.}                          
    {.5em}                       
    {}  
\newtheoremstyle{ddstyle} 
    {\topsep}                    
    {\topsep}                    
    {}                           
    {}                           
    {\scshape}                   
    {.}                          
    {.5em}                       
    {}  
\let\vphi\varphi
\let\p\partial
\let\ds\displaystyle
\let\noi\noindent
\theoremstyle{ddstyle}
\newtheorem*{corollary}{Corollary}
\newtheorem*{remark}{Remark}
\newenvironment{Example}[1]{%
  \IfBlankTF{#1}
    {}
    {}%
  \manualtheoreminner
}{\endmanualtheoreminner}
\newtheorem*{Proposition}{Proposition}
\newtheorem*{lemma}{Lemma}
\newcounter{theorem}
\begin{document}
\title{Characteristic integrals and general solution of the Ferapontov-Shabat-Yamilov lattice} %
\author{Dmitry K. Demskoi}
\address{School of Computing and Mathematics,
Charles Sturt University, NSW 2678, Australia}
\eads{\mailto{ddemskoy@csu.edu.au }}

\date{}
\begin{abstract}
For the finite (non-periodic) systems obtained from a lattice introduced by Ferapontov and  independently by Shabat and Yamilov,
we present a quadrature-free general solution and a recurrent formula for the characteristic 
integrals.
The derivation of these formulae relies on the underlying determinantal equations.
We illustrate the results using a two-component system.
\end{abstract}
\section{Introduction}
The lattice equation
\begin{equation}
	\p^2_{tx}v_n=\p_t v_{n} \p_x v_{n}\left(\frac{1}{v_n-v_{n-1}}-\frac{1}{v_{n+1}-v_n}\right),
	\label{ShY}
\end{equation}
 which we refer to as Ferapontov-Shabat-Yamilov (FSY) lattice, was introduced independently 
 in  \cite{yamil} and \cite{ferapont}. In the work of Shabat and Yamilov, equation (\ref{ShY})
was derived as one of the examples in the list of two-dimensional generalisations of the Toda-type lattices.
There is a close connection between the latter and the nonlinear Shr\"odinger-type (NLS) systems:
the lattices serve as auto-B\"acklund transformations for the NLS-type systems. 
Respectively, the two-dimensional generalisations of the Toda-type lattices play a role of 
auto-B\"acklund transformations of Davey-Stewartson-type systems.
It turns out that (\ref{ShY}) is compatible with the well-known Ishimori equation
which is a two-dimensional generalization of the Heisenberg spin model \cite{ishimori,garhabib}.

In the work of Ferapontov, the lattice (\ref{ShY}) emerged in the context of 
two-component hydrodynamic-type systems in Riemann invariants. 
It was shown that the characteristic velocities of such systems, 
related by the Laplace transformation, satisfy lattice (\ref{ShY}).
Ferapontov also investigated periodic reductions of the corresponding chain of the Laplace transformation
and lattice (\ref{ShY}).

In regards to the integrability of reductions of lattice equations, 
normally the periodic boundary conditions, i.e. \(v_{n+N}=v_n\), lead to solitonic equations, whereas finite 
(non-periodic) reductions, often referred to as {\it cutoff constraints}, yield linearisable equations.
For hyperbolic lattices like (\ref{ShY}), the finite reductions turn out to be Darboux
integrable. This type of integrability is characterised by existence of complete sets
of integrals along each characteristic direction.

A finite reduction of the lattice (\ref{ShY}) corresponding to the boundary condition 
\begin{equation}
	v_0=0,\ \  v_{N}=\infty
	\label{myboundary0N}
\end{equation}
was introduced by the author \cite{demint} in connection with the determinantal equation 
\begin{equation}
	\left|\begin{array}{cccc}
		u & u_t & \ldots & u_{t \ldots t} \\
		u_x & u_{t x} & \ldots & u_{t \ldots t x} \\
		\vdots & \vdots & \ddots & \vdots \\
		u_{x \ldots x} & u_{t x \ldots x} & \ldots& u_{t \ldots t x \ldots x}
		\end{array}\right|=0,
		\label{wn0}
\end{equation}
where
\[
u_{t \ldots t}=\partial_t^{N-1} u,\, u_{x \ldots x}=\partial_x^{N-1} u,\, u_{t \ldots t x \ldots x}=\partial_x^{N-1} \partial_t^{N-1} u,\,\mbox{etc.}
\]	
It was pointed out that the quantities below (see also formula (\ref{vdef}))
\begin{equation}
	v_1=\frac{1}{u},\, v_2=\frac{u_{tx}}{\left|\begin{array}{ll}
		u & u_t \\
		u_x & u_{t x}
		\end{array}\right|},\,
		v_3=\frac{\left|\begin{array}{ll}
			u_{tx} & u_{ttx} \\
			u_{txx} & u_{tt xt}
			\end{array}\right|}{
				\left|\begin{array}{cccc}
					u & u_t & u_{t t} \\
					u_x & u_{t x} &  u_{ttx} \\
					u_{xx} & u_{txx} & u_{ttxx}
					\end{array}\right|
			},\dots
	\label{newvarsv}
\end{equation}
satisfy lattice (\ref{ShY}).

On the other hand, equation (\ref{wn0}) is known \cite{darb,leznov,tsarev} to be closely related to the two-dimesional Toda lattice
in the algebraic form
\begin{equation}
	w_0=1,\ \ \ds \p^2_{tx} \ln w_n=\frac{w_{n+1}w_{n-1}}{w_n^2},\ \ w_N=0,\ \ 1<n< N.
	\label{todaln}
\end{equation}
Specifically, the relationship between the scalar equation (\ref{wn0}) and the lattice (\ref{todaln}) is realised if we introduce the lattice variables
$w_n$ according to the formula
\begin{equation}
	\begin{array}{l}
	 w_1=u,\,  w_2=\left|\begin{array}{ll}
		u & u_t \\
		u_x & u_{t x}
		\end{array}\right|,\,
	 w_3=\left|\begin{array}{cccc}
			u & u_t & u_{t t} \\
			u_x & u_{t x} &  u_{ttx} \\
			u_{xx} & u_{txx} & u_{ttxx}
			\end{array}\right|,\dots 
		\end{array}
	\label{wndefexpl}
\end{equation}
i.e. quantities $w_n$ represent the principal minors of the $n\times n$ matrix
corresponding to the determinant in the left-hand side of formula (\ref{wn0}).

The boundary condition
	\begin{equation}
		v_0=a,\, v_N=b,
		\label{habibconstr}
	\end{equation}
	where \(a,b=\mbox{const},\) is equivalent to 
(\ref{myboundary0N}) 
due to invariance of the lattice (\ref{ShY}) under the M\"obius transformation
	\begin{equation}
		\vphi_n=\frac{\alpha v_n+\beta}{\gamma v_n+\delta}.
		\label{mobius}
	\end{equation}
Thus, by choosing the parameters such that (\ref{mobius}) becomes
	\begin{equation}
		\vphi_n=\frac{(a-b) v_n}{a-b-v_n}+a,\ \ a\ne b,	
		\label{mobiusab}
	\end{equation}
	we observe that 
\begin{equation}
	\vphi_0=a,\ \ \vphi_N=b,
	\label{habibbound}
\end{equation}
provided that $v_n$ satisfies the condition (\ref{myboundary0N}). 
Furthermore, the degenerate case $b=a$ can be recovered in two steps: 
\begin{enumerate}
	\item 
Apply the reduction $v_{N-1}=0$ in the system obtained from the condition (\ref{myboundary0N}). 
This implies the homogeneous boundary condition $v_0=0,\, v_{N-1}=0$;
\item Shift the variables: $\vphi_n= v_n+a.$
\end{enumerate}

The Darboux integrability of the Toda lattice (\ref{todaln}), defined as the existence of complete sets of characteristic integrals, is well-established \cite{shabyam}. Consequently, the lattice (\ref{ShY}) with the boundary condition (\ref{myboundary0N}), or equivalently (\ref{habibconstr}), is also Darboux integrable. 
Furthermore, an explicit formula for the characteristic integrals was provided in \cite{demint} (see formula (10) therein), where the integrals of (\ref{ShY}) were expressed in terms of a single variable, \( u = 1/v_1 \), and its derivatives. 
This formula yields characteristic integrals for all lattices derived from the scalar equation (\ref{wn0}) by introducing lattice variables.

However, a more natural representation of the integrals should involve all lattice variables and avoid mixed derivatives. While such expressions can be derived from the aforementioned formula, the process becomes increasingly computationally expensive as the number of lattice variables grows. The related approaches in \cite{smirnov} and \cite{habibsak}, which rely on the system’s linear representation (Lax pair), allow the construction of integrals as coefficients of a factorised differential operator. The drawback of this approach, however, is its labour-intensive nature, as it requires expanding the differential operator to determine its coefficients.  

Thus, it would be advantageous to find a direct formula for the integrals that avoids the need to compute additional objects.  
In the following section, equation (\ref{ShY}) will be derived from the determinantal equation (\ref{wn0}) \textit{ab initio}.  
Subsequently, recurrent formulae for the integrals and the general solutions will be calculated in the variables of the hyperbolic systems obtained by constraining 
lattice (\ref{ShY}) with boundary conditions (\ref{myboundary0N}) and (\ref{habibbound}).  
In conclusion, we will discuss discrete and semi-discrete analogues of (\ref{ShY}) that are related 
to the determinantal equation (\ref{wn0}).  

\section{Determinantal structure of the FSY lattice} 
First, we introduce some notation to prepare the ground for a derivation of the lattice (\ref{ShY}).
Consider a function $w_n$ defined by the $n$-th order determinant
\begin{equation}
	w_n(u)=\mbox{det}\left(a_{ij}\right),
	\label{wndef}
\end{equation}
where 
\[\quad a_{i+1,j+1}=\frac{\partial^{i+j} u}{\partial x^i \partial t^j},\ \ i,j=1\dots n-1.\] 
Hence, the argument of $w_n$ is the entry in the top left corner of the determinant.
Unless otherwise stated, it will be suppressed, i.e. $w_n=w_n(u).$ 
In this notation, equation (\ref{wn0}) can be re-stated as follows
\begin{equation}
	w_N(u)=0.
	\label{wN01line}
\end{equation}
Without any further restrictions, $w_n$ satisfies the lattice equation 
	\begin{equation}
		w_0=1,\ \ \ds \p^2_{tx} \ln w_n=\frac{w_{n+1}w_{n-1}}{w_n^2}
		\label{todalninf}
	\end{equation}
which is infinite in the positive direction.
Formula (\ref{wndef}) is essentially a definition of new 
variables $w_n$ which transform the scalar equation (\ref{wn0}) into a chain of hyperbolic
equations (\ref{todalninf}).
Imposing the condition $w_N(u)=\mbox{const}$ turns (\ref{todalninf})
into a Darboux integrable system of PDEs. 
However, we note that (\ref{todalninf}) then implies $w_{N+1}(u)=0,$ thus, without loss of generality we may assume 
that equation (\ref{wN01line}) is satisfied.

\subsubsection*{Jacobi determinantal identity.}

We denote the minor of the entry in the $p$-th row and $q$-th column of the matrix $(a_{ij})$ as $m_{pq}$ 
and the minor obtained from $(a_{ij})$ by removing its $p$- and $q$-th rows as well as $r$- and $s$-th columns as $m_{pqrs}$. 
In such notation, the Jacobi determinant identity can be written as follows \cite{hirotajac}
\begin{equation}
	m_{pqrs} w_n=m_{pr}m_{qs}-m_{ps}m_{qr}. 
	\label{sylvester}
\end{equation}
When it is necessary to indicate the order of minors $m_{pq}$ and $m_{pqrs}$ explicitly, we write $m_{n;pq}$ and $m_{n;pqrs}$ respectively.
This notation means that the minors are obtained by deleting rows and columns from the $n\times n$ matrix $(a_{ij}).$
We will refer to instances of the Jacobi identity by indicating the list of indices $(p,q,r,s)$.

Consider the ratios of the determinants:  
\begin{equation}
J_{n ; p}=\frac{m_{p n}}{w_{n-1}},\ \ I_{n;p}=\frac{m_{np}}{w_{n-1}},\ \  p=1, \ldots, n
\label{defIJ}
\end{equation}
with the values for \(p = n\) and \(p = 0\):  
\[J_{n;n}=I_{n;n}=1,\ \ J_{n;0}=I_{n;0}=0.\]
The quantities \(J_{n;p}\) and \(I_{n;p}\) are connected by the involution \(t \leftrightarrow x\).  
Broadly speaking, they represent characteristic \(t\)- and \(x\)-integrals, respectively. A more precise description is provided below.%
\begin{Proposition}\cite{demtran}  The quantities $J_{n ; p}$ and $I_{n ; p}$ satisfy the equations
\begin{equation}
 \partial_t J_{n ; p}=\frac{w_{n-2} w_n}{w_{n-1}^2} J_{n-1 ; p},
	\label{recJ}	
\end{equation}
\begin{equation}
	J_{n ; p}=J_{n-1 ; p-1}-\partial_x J_{n-1 ; p}+J_{n-1 ; p} \partial_x \ln \frac{w_{n-1}}{w_{n-2}},
	\label{recJpair}	
\end{equation}
and 
\begin{equation}
	\partial_x I_{n ; p}=\frac{w_{n-2} w_n}{w_{n-1}^2} I_{n-1 ; p},
	\label{recI1}
\end{equation}
\begin{equation}
	I_{n ; p}=I_{n-1 ; p-1}-\partial_t I_{n-1 ; p}+I_{n-1 ; p} \partial_t \ln \frac{w_{n-1}}{w_{n-2}}.
	\label{recI2}
\end{equation}
\end{Proposition}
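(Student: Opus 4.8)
The plan is to reduce everything to two structural facts about the determinant \(w_n\) of (\ref{wndef}): the \emph{shift rules} \(\partial_x a_{ij}=a_{i+1,j}\) and \(\partial_t a_{ij}=a_{i,j+1}\), which say that an \(x\)-derivative raises the row-order of every entry by one and a \(t\)-derivative raises its column-order by one, together with the Jacobi identity (\ref{sylvester}). Since \(J_{n;p}\) and \(I_{n;p}\) are interchanged by the involution \(t\leftrightarrow x\), which also swaps (\ref{recJ}) with (\ref{recI1}) and (\ref{recJpair}) with (\ref{recI2}), I would only prove the two \(J\)-relations and obtain the \(I\)-relations for free. The underlying principle in each case is that applying \(\partial_t\) (resp.\ \(\partial_x\)) to a minor and expanding by columns (resp.\ rows) kills every term in which a non-final line is differentiated, because the shift produces a repeated column (resp.\ row); the few surviving terms are again bordered determinants, i.e.\ other minors.

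For (\ref{recJ}) I would differentiate \(J_{n;p}=m_{pn}/w_{n-1}\) with the quotient rule. By the principle above, \(\partial_t m_{n;pn}=m_{n;p,n-1}\) and \(\partial_t w_{n-1}=\partial_t m_{n;nn}=m_{n;n,n-1}\), so that
\[ \partial_t J_{n;p}=\frac{m_{n;p,n-1}\,w_{n-1}-m_{n;pn}\,m_{n;n,n-1}}{w_{n-1}^2}. \]
Writing \(w_{n-1}=m_{n;nn}\), the numerator is exactly the right-hand side of the Jacobi identity (\ref{sylvester}) with indices \((p,n,n-1,n)\), so it collapses to \(m_{n;p,n,n-1,n}\,w_n\). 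Recognising the double minor \(m_{n;p,n,n-1,n}\) as the single minor \(m_{n-1;p,n-1}=w_{n-2}J_{n-1;p}\) of the \((n-1)\times(n-1)\) matrix finishes the computation and produces precisely the factor \(w_{n-2}w_n/w_{n-1}^2\).

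The relation (\ref{recJpair}) is the harder one, since it mixes orders \(n\) and \(n-1\) with an \(x\)-derivative. I would rearrange it as \(\partial_x J_{n-1;p}=J_{n-1;p-1}-J_{n;p}+J_{n-1;p}\,\partial_x\ln(w_{n-1}/w_{n-2})\) and compute the left-hand side directly. Here the row-expansion of \(\partial_x m_{n-1;p,n-1}\) generically has \emph{two} surviving terms: differentiating row \(p-1\) (whose successor \(p\) is absent from the minor) reproduces \(w_{n-2}J_{n-1;p-1}\), while differentiating the last row yields a bordered \((n-2)\times(n-2)\) determinant \(M=m_{n;p,n-1,n-1,n}\). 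After also accounting for the quotient-rule contribution \(-J_{n-1;p}\,\partial_x\ln w_{n-2}\) and for \(\partial_x w_{n-1}=m_{n;n-1,n}\), all of these terms cancel except for an algebraic identity to be closed, namely
\[ m_{n-1;p,n-1}\,m_{n;n-1,n}-m_{n;p,n-1,n-1,n}\,w_{n-1}=m_{n;pn}\,w_{n-2}. \]

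The main obstacle is this last identity. Unlike the situation in (\ref{recJ}), it is not the verbatim Jacobi identity for \(w_n\): every minor occurring in it has the last column deleted, so it is a Grassmann--Plücker relation among the maximal and submaximal minors of the rectangular array obtained from \((a_{ij})\) by dropping column \(n\). I would establish it either as an instance of Sylvester's identity on a suitably bordered matrix, or by a short direct Laplace expansion along that last column; its structure is already visible in the smallest case and is identical in general. Finally I would treat the endpoints separately, using \(J_{n;0}=0\) and \(J_{n;n}=1\): when \(p=1\) the row-\((p-1)\) term is absent, and when \(p=n-1\) the last-row term is absent, so these degenerate cases must be matched against the stated special values to confirm (\ref{recJpair}) throughout \(p=1,\dots,n\). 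The relations (\ref{recI1}) and (\ref{recI2}) then follow by applying \(t\leftrightarrow x\).
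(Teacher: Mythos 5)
The paper does not prove this Proposition itself --- it is imported by citation from \cite{demtran} --- so there is no in-text argument to compare against; judged on its own, your proof is correct in strategy and essentially complete. For (\ref{recJ}) the computation is airtight: the shift rule kills all but the last column in $\partial_t m_{n;pn}$ and $\partial_t w_{n-1}$, and the resulting numerator $m_{n;p,n-1}\,m_{n;nn}-m_{n;pn}\,m_{n;n,n-1}$ is literally the right-hand side of (\ref{sylvester}) with indices $(p,n,n-1,n)$, collapsing to $w_{n-2}J_{n-1;p}\,w_n$ as you say. This is exactly the kind of manipulation the paper itself performs in the proof of its Lemma, so your toolkit matches the author's.

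For (\ref{recJpair}) your bookkeeping is also correct, and the single step you leave open, the quadratic relation
\[
m_{n-1;p,n-1}\,m_{n;n-1,n}-m_{n;p,n-1,n-1,n}\,w_{n-1}=m_{n;pn}\,w_{n-2},
\]
does close by precisely the bordering you propose. Apply (\ref{sylvester}) with indices $(p,n-1,n-1,n)$, $p<n-1$, to the $n\times n$ matrix $A^{*}$ obtained from $(a_{ij})$ by replacing its last column with the unit vector $e_n$. Then $\det A^{*}=w_{n-1}$; the minors $m^{*}_{n-1,n}$, $m^{*}_{pn}$ and $m^{*}_{p,n-1,n-1,n}$ delete the replaced column and so equal $m_{n;n-1,n}$, $m_{n;pn}$ and $m_{n;p,n-1,n-1,n}$; and $m^{*}_{p,n-1}$, $m^{*}_{n-1,n-1}$ retain the column $e_n$ with its $1$ in the bottom-right corner, so expansion reduces them to $m_{n-1;p,n-1}$ and $w_{n-2}$ respectively. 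The Jacobi identity for $A^{*}$ then is verbatim the relation above. Together with your treatment of the endpoints $p=1$ and $p=n-1$ (where one of the two surviving rows is absent, consistently with $J_{n;0}=0$ and $J_{n;n}=1$) and the $t\leftrightarrow x$ involution for (\ref{recI1})--(\ref{recI2}), this completes a valid proof.
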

Now the connection between the quantities  $J_{n ; p}$ and  $I_{n ; p}$ on one hand,
and the characteristic integrals of the lattice (\ref{todaln}) on the other, is obvious.
Indeed, if $w_n=0$ for some $n=N,$ 
then, as follows from (\ref{recJ}) and (\ref{recI1}), 
formula (\ref{defIJ}) gives expressions for the characteristic integrals of equation (\ref{wn0}) as well as
of the lattice (\ref{todaln}), depending on whether $w_{n},$ present in the formulae
are interpreted as determinants or the lattice variables as in (\ref{todaln}), i.e.
\[\p_t J_{N;p}=0,\ \ \p_x I_{N;p}=0\ \ \mbox{mod}\,\,  w_N=0.\]
Thus, we have $2N-2$ integrals which equals the order
of equation (\ref{wn0}). Additionally, the independence of these integrals 
is evident from their structure in (\ref{defIJ}). 
Indeed, all these integrals have the same order, $2N-3,$ 
yet by construction each integral depends on the unique set of variables.
\begin{Example}{1.1}
	Consider the case $N=3.$ The scalar equation (\ref{wn0}) takes the form 
	\begin{equation}
		\left|\begin{array}{cccc}
			u & u_t & u_{t t} \\
			u_x & u_{t x} &  u_{ttx} \\
			u_{xx} & u_{txx} & u_{ttxx}
			\end{array}\right|=0.
		\label{wn3}
	\end{equation}
The respective lattice (\ref{todaln}) becomes a system for variables $w_1$ and $w_2:$
\begin{equation}
	w_0=1,\ \ \ds \p^2_{tx} \ln w_1=\frac{w_{2}}{w_1^2},\ \  
	\p^2_{tx} \ln w_2=0, \ \ w_3=0.
	\label{todaln3}
\end{equation}
Formula (\ref{defIJ}) gives the following expressions for the $t$-integrals
	\begin{equation}
		J_{3 ; 1}=\frac{\left|\begin{array}{ll}
			u_x & u_{t x} \\
			u_{x x} & u_{t x x}
			\end{array}\right|}{\left|\begin{array}{ll}
			u & u_t \\
			u_x & u_{t x}
			\end{array}\right|},\ \ 
			J_{3 ; 2}=\frac{\left|\begin{array}{ll}
				u & u_t \\
				u_{x x} & u_{t x x}
				\end{array}\right|}{\left|\begin{array}{ll}
				u & u_t \\
				u_x & u_{t x}
				\end{array}\right|}.
	\label{J31J32}			
	\end{equation}
	In order to express $J_{3 ; 1}$ and $J_{3 ; 2}$ in terms of $w_1$ and $w_2$ one can substitute $u=w_1$
	and eliminate mixed derivatives using (\ref{todaln3}). However, a simpler way to calculate them is to use formula (\ref{recJpair}) which gives
	\begin{equation}
		\begin{array}{ll}
			J_{2 ; 1} & \ds = J_{1 ; 0}-\partial_x J_{1 ; 1}+J_{1 ; 1} \partial_x \ln \frac{w_1}{w_0} \\
			& =\partial_x \ln w_1,
			\end{array}
			\label{J21todaln}
	\end{equation}
	Then the integrals are calculated as follows
	\begin{equation}
		\begin{array}{ll}
			 J_{3 ; 1} & \ds =J_{2 ; 0}-\partial_x J_{2 ; 1}+J_{2 ; 1} \partial_x \ln \frac{w_2}{w_1} \\[4mm]
			 & \ds =-\partial_x^2 \ln w_1+\left(\ln w_1\right)_x \partial_x \ln \frac{w_2}{w_1} \\[4mm]
			& \ds =-\frac{w_{1, x x}}{w_1}+\frac{w_{1, x} w_{2, x}}{w_1 w_2},\\[4mm]
			J_{3 ; 2} & \ds =J_{2 ; 1}-\partial_x J_{2 ; 2}+J_{2 ; 2} \partial_x \ln \frac{w_2}{w_1} \\[2mm]
			& \ds =\partial_x \ln w_1+\partial_x \ln \frac{w_2}{w_1} \\[2mm]
			& \ds =\partial_x \ln w_2 .
		\end{array}
	\end{equation}
	\end{Example}
\subsubsection*{New variables.}
Let us introduce a new variable $v_n$ as the ratio of the determinants 
\begin{equation}
	v_n=\frac{w_{n-1}\left(u_{t x}\right)}{w_n},\ \ n\ge 1,
	\label{vdef}
\end{equation} 
or the same 
\(v_n=\p_u \ln w_n.\)
It follows from this definition that
\begin{equation}
	v_1=\frac{1}{u}
	\label{v_1}	
\end{equation}
since $w_1=u.$
\begin{lemma}
Quantity $v_n$ satisfies the following equation
\begin{equation}
	\p_t v_n=-\frac{w_{n-1}\left(u_x\right) w_n\left(u_t\right)}{w_n^2}
	\label{vn_t}
\end{equation}
\end{lemma}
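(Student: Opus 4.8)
The plan is to reduce the claim to a single quadratic determinantal identity and then to dispatch that identity by the same circle of ideas (the Jacobi identity (\ref{sylvester})) that the paper relies on elsewhere. First I would record the two facts that make the bookkeeping manageable. By the definition (\ref{vdef}), the numerator $w_{n-1}(u_{tx})$ is exactly the $(1,1)$-minor of the $n\times n$ matrix $(a_{ij})$ from (\ref{wndef}), so $v_n=m_{n;11}/w_n$, in agreement with $v_n=\partial_u\ln w_n$: the corner entry $a_{11}=u$ occurs linearly and its cofactor is precisely $m_{n;11}$. Second, the entries obey the shift rules $\partial_t a_{ij}=a_{i,j+1}$ and $\partial_x a_{ij}=a_{i+1,j}$; consequently, when $\partial_t$ is applied to any of these Wronskian-type determinants, every term of the Leibniz expansion in which an interior $t$-column is differentiated acquires a repeated column and dies, and only the term bumping the last $t$-column survives. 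This collapses $\partial_t w_n$ and $\partial_t m_{n;11}$ into single determinants.

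Carrying this out, I would write $\partial_t v_n=(w_n\,\partial_t m_{n;11}-m_{n;11}\,\partial_t w_n)/w_n^2$ and identify all four determinants $w_n$, $m_{n;11}$, $\partial_t w_n$, $\partial_t m_{n;11}$, together with the two factors $w_{n-1}(u_x)$ and $w_n(u_t)$ appearing on the right of (\ref{vn_t}), as minors of the single bordered $(n+1)\times(n+1)$ array $(a_{ij})_{i,j=1}^{n+1}$. Concretely, all six delete the bottom row $\partial_x^{n}u,\dots$ (and three of them delete the first row $u,u_t,\dots$ as well), so the entire statement lives inside one rectangular sub-array. After this identification the Lemma becomes the single quadratic relation $D_1D_2-D_3D_4=-D_5D_6$, where $D_2=w_n$ and the six $D_i$ are the minors just named, each product pairing an $n$-minor with an $(n-1)$-minor.

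The heart of the matter, and the step I expect to be the main obstacle, is proving this quadratic relation for general $n$. Because the two factors in each product have different sizes, it is not a direct instance of (\ref{sylvester}) but rather a three-term Grassmann--Plücker (Sylvester-type) identity linking the maximal minors of the $n\times(n+1)$ array $P$ (delete the bottom row) with those of the $(n-1)\times(n+1)$ array $Q$ obtained from $P$ by also removing its top row. I would prove it by Laplace-expanding each $n$-minor $[\hat c]_P$ along the top ($\partial_x^0$) row, which rewrites it through the $(n-1)$-minors $[\widehat{ck}]_Q$; substituting into $D_1D_2-D_3D_4+D_5D_6$ and collecting the coefficient of each entry of that top row, the expression reduces to the ordinary three-term Plücker relation among the maximal minors of $Q$, which vanishes identically. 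The only genuine difficulty is the sign bookkeeping in the Laplace expansion; the cases $n=1$ (where the claim reads $\partial_t(1/u)=-u_t/u^2$) and $n=2$ give quick confirmations that also fix the signs. An alternative, should the direct route prove cumbersome, is an induction on $n$ in which $v_n$ is expressed through the minors of the Proposition and differentiated with the aid of (\ref{recJ}).
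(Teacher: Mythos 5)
Your argument is correct, but it takes a genuinely different route from the paper's. The paper proves the Lemma by induction on $n$: the Jacobi identity (\ref{sylvester}) with $(p,q,r,s)=(1,n+1,1,n+1)$ is first recast as the recursion $v_{n+1}=v_{n}+\frac{w_{n}(u_x)\,w_{n}(u_t)}{w_{n}\,w_{n+1}}$ (formula (\ref{JCBR})), which is then differentiated in $t$ using the inductive hypothesis together with the two consequences (\ref{twostar}) and (\ref{twostar2}) of the Proposition (the cases $p=1$ of (\ref{recJ}) and (\ref{recI2})); the base case is $\partial_t(1/u)=-u_t/u^2$. You instead differentiate $v_n=m_{n;11}/w_n$ directly by the quotient rule, use the Wronskian collapse of $\partial_t$ to realise all six determinants as minors of the bordered $(n+1)\times(n+1)$ array, and reduce the Lemma to the single three-term identity
$[\widehat{n+1}]_P\,[\widehat{1,n}]_Q-[\widehat{n}]_P\,[\widehat{1,n+1}]_Q+[\widehat{1}]_P\,[\widehat{n,n+1}]_Q=0$,
which you correctly observe is not an instance of (\ref{sylvester}) but a Sylvester/Pl\"ucker relation between the maximal minors of the $n\times(n+1)$ block $P$ and of its top-row-deleted subblock $Q$; your proposed proof (Laplace expansion of each $[\hat c]_P$ along the top row, followed by the genuine three-term Pl\"ucker relations for $Q$) is a standard and valid way to establish it, and the signs check out against the Lemma for $n=2$. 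The paper's induction has the advantage that its intermediate product, the recursion (\ref{JCBR}), is reused later (in the Corollary, to derive (\ref{wnpm1}), and in the Theorem), and it stays entirely within the machinery already set up; your direct proof avoids both the induction and the auxiliary formulas of the Proposition, at the price of importing one classical rectangular-minor identity and the attendant sign bookkeeping, which you rightly flag as the only delicate point.
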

\begin{proof}
Identity (\ref{sylvester}) with $(p,q,r,s)=(1,n+1,1,n+1)$ can be cast into the form
\begin{equation}
w_{n-1}(u_{t x}) w_{n+1}
=w_{n}(u_{t x}) w_{n}-w_{n}(u_x) w_{n}(u_t).
	 \label{J0}
\end{equation}
Dividing it by $w_n w_{n+1}$ and employing (\ref{vdef}) we obtain
\begin{equation}
	v_{n+1}=v_{n}+\frac{w_{n}\left(u_x\right)}{w_{n}} \frac{w_{n}\left(u_t\right)}{w_{n+1} }.
	\label{JCBR}
\end{equation}
Setting $p=1$ in $(\ref{recJ})$ we  can rewrite it as follows
\begin{equation}
	\partial_t \frac{w_{n-1}\left(u_x\right)}{w_{n-1}}=\frac{w_{n-2}\left(u_x\right) w_n}{w_{n-1}^2}.
	\label{twostar}	
\end{equation}
Additionally, setting $p=1$ in $(\ref{recI2})$  and noting that $I_{n-1,0}=0$ we get the following equation
\begin{equation}
	\partial_t \frac{w_{n-1}\left(u_t\right)}{w_{n}}=-\frac{w_{n-1} w_n\left(u_t\right)}{w_n^2}.
	\label{twostar2}	
\end{equation}
Further, we employ induction on $n$. For $n=1$ equation (\ref{vn_t}) takes the form $\p_t(1/u)=-u_t/u^2$ hence satisfied.
Differentiating (\ref{JCBR}) and employing (\ref{vn_t}), (\ref{twostar}) and (\ref{twostar2}), we get
\[
\p_t v_{n+1}=-\frac{w_n\left(u_x\right) w_{n+1}\left(u_t\right)}{w_{n+1}^2}
\]
which is nothing but the upshifted formula (\ref{vn_t}).
\end{proof}
\begin{corollary}
	${}^{}$
\begin{enumerate}[i.] 
		\item 
	Due to symmetry $t\leftrightarrow x$ in the definition of $v,$ formula (\ref{vn_t}) implies the following equation
	\begin{equation}
			\p_x v_n=-\frac{w_{n-1}\left(u_t\right) w_n\left(u_x\right)}{w_n^2}.
			\label{vn_x}
		\end{equation}
\item Additionally, let us state the $x$-counterparts of formulae (\ref{twostar}) and (\ref{twostar2}), which will be useful later:
\begin{equation}
	\partial_x \frac{w_{n-1}\left(u_t\right)}{w_{n-1}} =\frac{w_{n-2}\left(u_t\right) w_n}{w_{n-1}^2}, \ \ 
	\partial_x \frac{w_{n-1}\left(u_x\right)}{w_{n}} =-\frac{w_{n-1} w_n\left(u_x\right)}{w_n^2}
	.
	\label{twostarx}
\end{equation}
\item Multiplying (\ref{vn_t}) and (\ref{vn_x}) and then using (\ref{JCBR}) we can eliminate 
\(w_{n-1}(u_t) w_{n-1}(u_x)\) and \(w_n(u_t) w_n(u_x)\) to obtain the formula relating lattice variables $v_n$ and $w_n$ \cite{ferapont}
	\begin{equation}
		\frac{w_{n-1} w_{n+1}}{w_n^2}=\frac{v_{n, t} v_{n, x}}{\left(v_n-v_{n-1}\right)\left(v_{n+1}-v_n\right)}.
		\label{wnpm1}
	\end{equation}
\end{enumerate}
\end{corollary}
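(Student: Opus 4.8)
The plan is to obtain all three parts as direct consequences of the Lemma together with the Jacobi identity, the principal tool being the $t\leftrightarrow x$ symmetry of the determinants $w_n$.

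For items (i) and (ii) I would first make this symmetry explicit. Since the matrix entries satisfy $a_{i+1,j+1}=\partial_x^i\partial_t^j u$, the row index records $x$-differentiations and the column index $t$-differentiations; transposing the matrix interchanges these roles, which is precisely the substitution $t\leftrightarrow x$. As a determinant is unchanged by transposition, every $w_n$ is fixed by this involution, while its argument transforms by $u_x\leftrightarrow u_t$ and $u_{tx}\mapsto u_{tx}$. In particular $v_n=w_{n-1}(u_{tx})/w_n$ is itself invariant. Applying the involution to (\ref{vn_t}) turns $\partial_t$ into $\partial_x$ and swaps $w_{n-1}(u_x)\leftrightarrow w_{n-1}(u_t)$ and $w_n(u_t)\leftrightarrow w_n(u_x)$, reproducing (\ref{vn_x}) term by term. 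The same substitution applied to (\ref{twostar}) and (\ref{twostar2}) yields the two relations collected in (\ref{twostarx}).

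For item (iii) I would multiply (\ref{vn_t}) by (\ref{vn_x}); the two signs cancel and give
\[
v_{n,t}\,v_{n,x}=\frac{w_{n-1}(u_t)\,w_{n-1}(u_x)\,w_n(u_t)\,w_n(u_x)}{w_n^4}.
\]
It then remains to eliminate the two products $w_n(u_t)w_n(u_x)$ and $w_{n-1}(u_t)w_{n-1}(u_x)$ of argument-shifted determinants. Formula (\ref{JCBR}) rearranges into $w_n(u_x)w_n(u_t)=(v_{n+1}-v_n)\,w_n w_{n+1}$, and the same relation written at level $n-1$ gives $w_{n-1}(u_x)w_{n-1}(u_t)=(v_n-v_{n-1})\,w_{n-1}w_n$. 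Substituting both, a factor $w_n^2$ cancels against the denominator and the numerator reassembles into $(v_n-v_{n-1})(v_{n+1}-v_n)\,w_{n-1}w_{n+1}$, which is exactly (\ref{wnpm1}) after dividing across.

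I do not anticipate any serious obstacle, as each part is a short deduction from results already established. The only place demanding care is the elimination in (iii), where one must remember to invoke (\ref{JCBR}) at both levels $n$ and $n-1$, so that all four shifted determinants $w_{n-1}(u_t),\,w_{n-1}(u_x),\,w_n(u_t),\,w_n(u_x)$ are removed and only the genuine lattice variables $v_n,\,w_n$ survive; overlooking the lower-level instance would leave the right-hand side expressed through auxiliary determinants rather than through the $v_n$ alone.
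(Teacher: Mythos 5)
Your proposal is correct and follows essentially the same route as the paper: the first two items are obtained by the $t\leftrightarrow x$ involution (which you justify, quite properly, via transposition invariance of the determinants $w_n$), and the third by multiplying (\ref{vn_t}) with (\ref{vn_x}) and applying (\ref{JCBR}) at both levels $n$ and $n-1$ to eliminate the shifted determinants. The only difference is that you spell out the transposition argument explicitly, which the paper leaves implicit in the phrase ``due to symmetry.''
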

\noi The following theorem was stated without a proof in \cite{demint}.
\begin{theorem*}{Quantity $v_n$ satisfies the lattice equation
(\ref{ShY}) with the left-boundary condition $v_0=0.$}
\end{theorem*}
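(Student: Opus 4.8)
The plan is to derive the FSY equation by a single differentiation, starting from the Lemma and the Corollary and never expanding the underlying determinants. The left-boundary condition is immediate: since $w_0=1$, the equivalent definition $v_n=\partial_u\ln w_n$ gives $v_0=\partial_u\ln w_0=0$.

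For the bulk identity I would first rewrite the Lemma's formula (\ref{vn_t}) as a product of two ratios whose $x$-derivatives are already catalogued in the Corollary,
\[
\partial_t v_n=-\frac{w_{n-1}(u_x)}{w_n}\cdot\frac{w_n(u_t)}{w_n},
\]
and then apply $\partial_x$ through the two identities in (\ref{twostarx}). The second identity of (\ref{twostarx}) differentiates the first factor directly, while the first identity, after the index shift $n\mapsto n+1$, differentiates the second factor. The product rule then collapses to
\[
\partial_x\partial_t v_n=\frac{w_{n-1}\,w_n(u_x)\,w_n(u_t)}{w_n^3}-\frac{w_{n-1}(u_x)\,w_{n-1}(u_t)\,w_{n+1}}{w_n^3}.
\]

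It then remains to recognise this as the FSY nonlinearity. Multiplying (\ref{vn_t}) by (\ref{vn_x}) gives $\partial_t v_n\,\partial_x v_n=w_{n-1}(u_x)w_{n-1}(u_t)w_n(u_x)w_n(u_t)/w_n^4$, while (\ref{JCBR}), together with its down-shift $n\mapsto n-1$, supplies the two gaps $v_{n+1}-v_n=w_n(u_x)w_n(u_t)/(w_n w_{n+1})$ and $v_n-v_{n-1}=w_{n-1}(u_x)w_{n-1}(u_t)/(w_{n-1}w_n)$. Dividing the product by each gap cancels the shifted-argument minors and reproduces exactly the two terms displayed above, so that
\[
\partial_x\partial_t v_n=\partial_t v_n\,\partial_x v_n\Bigl(\tfrac{1}{v_n-v_{n-1}}-\tfrac{1}{v_{n+1}-v_n}\Bigr),
\]
which is (\ref{ShY}).

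The computation is essentially bookkeeping, and I expect the only genuine obstacle to be the careful tracking of the shifted arguments $w_{n-1}(u_x)$, $w_n(u_t)$, and so on, so that the correctly index-shifted instance of (\ref{twostarx}) is matched to each factor; once the grouping of $\partial_t v_n$ into two ratios is chosen, everything else follows by substitution and cancellation.
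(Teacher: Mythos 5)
Your proposal is correct and follows essentially the same route as the paper: differentiate (\ref{vn_t}) in $x$ via the identities (\ref{twostarx}) to get the two-term expression for $\p^2_{tx}v_n$, then eliminate the $w$'s using (\ref{JCBR}) and the product $\p_t v_n\,\p_x v_n$ (which is exactly how (\ref{wnpm1}) is obtained in the Corollary). The only cosmetic difference is that the paper verifies $n=1$ by direct computation rather than relying on the convention $w_0(\cdot)=1$ to cover that case.
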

\begin{proof}
The case $n=1$ can be verified directly if we consider (\ref{v_1}) and the fact that 
\[v_2=\frac{u_{tx}}{uu_{tx}-u_t u_x}.\]
Further, for $n>1,$ on differentiating (\ref{vn_t}) with respect to $x$ and using (\ref{twostarx}), we obtain
	\begin{equation}
		\p^2_{t x}v_{n} =\frac{w_{n+1} w_{n-1}}{w_n^2}\left(\frac{w_n(u_t) w_n(u_x)}{w_n w_{n+1}}
		 -\frac{w_{n-1}\left(u_t\right) w_{n-1}\left(u_x\right)}{w_{n-1} w_n}\right).
		 \label{vn_tx_pre}
	\end{equation}
Then, using formulae (\ref{JCBR})	and (\ref{wnpm1}) we can eliminate variables $w_n$ to obtain (\ref{ShY}).
\end{proof} 

\section{Formulae for integrals and general solutions of the FSY lattice}
Here we discuss the aspects of Darboux integrability of the obtained systems, e.g., the existence of 
characteristic integrals and solutions depending on arbitrary functions. 
Given that both systems (\ref{ShY}) and (\ref{todaln}) have their origin in the scalar equation
(\ref{wn0}) they inherit its Darboux integrability and the integrals. 
Hence the integrals (\ref{defIJ}) will be recalculated in the lattice variables $v_n$ given by formula (\ref{vdef}).
\subsection{Recurrent formula for characteristic integrals}
\subsubsection*{Boundary condition $(0,\infty)$.}
Firstly, consider the boundary condition (\ref{myboundary0N}).
It is convenient to calculate the $t$-integrals using the recurrent formula (\ref{recJpair}). 
To recalculate them in the lattice variables $v_{n}$, 
we re-write (\ref{wnpm1}) and make self-substitution:
\begin{equation}
	\begin{array}{ll}
		\ds \frac{w_{n+1}}{w_n} & \ds =\frac{w_n}{w_{n-1}} \cdot \frac{v_{n, t} v_{n, x}}{\left(v_{n+1}-v_n\right)\left(v_n-v_{n-1}\right)} \\[4mm]
		&\ds =\frac{w_{n-1}}{w_{n-2}} \cdot \frac{v_{n, t} v_{n, x} v_{n-1, t} v_{n-1, x}}{\left(v_{n+1}-v_n\right)\left(v_n-v_{n-1}\right)^2\left(v_{n-1}-v_{n-2}\right)}\\[4mm]
		& \dots \\
		& \ds =\frac{\ds\prod_{m=1}^n v_{m, t} v_{m, x}}{v_1^2\left(v_{n+1}-v_n\right) \ds \prod_{m=1}^{n-1}\left(v_{m+1}-v_m\right)^2},\ \ n\ge 1.
		\end{array}
		\label{selfsub}
\end{equation}
Then, on differentiating and eliminating $v_{m,tx}$ by means of (\ref{ShY}), we obtain
\begin{equation}
	\begin{array}{ll}
	\Lambda_n&\ds =\p_x \ln \frac{w_{n+1}}{w_n} \\
	 &\ds =  -\frac{v_{1, x}}{v_1}+\sum_{m=1}^n\left(\frac{v_{m,xx}}{v_{m, x}}-\frac{v_{m+1, x}}{v_{m+1}-v_m}\right) 
		 +\sum_{m=1}^{n-1} \frac{v_{m, x}}{v_{m+1}-v_m}.
	\end{array}
	\label{lnwndif}
\end{equation}
Thus, formula (\ref{recJpair}) takes the following form
\begin{equation}
	J_{n ; p}=J_{n-1 ; p-1}-\partial_x J_{n-1 ; p}+\Lambda_{n-2} J_{n-1 ; p},
	\label{intrecJ}
\end{equation}
where $\Lambda_{n}$ is given by formula (\ref{lnwndif}).
\vspace{2mm}

\begin{Example}{1.2}
Consider equation (\ref{wn3}) in variables (\ref{vdef}) which implies the condition
\begin{equation}
	v_0=0,\, v_{3}=\infty.
	\label{myboundary}
\end{equation}
Condition (\ref{myboundary}) turns lattice (\ref{ShY}) into the following two-component system
\begin{equation}
	\begin{array}{ll}
		v_{1, tx} &\ds = v_{1, t} v_{1, x} \left(\frac{1}{v_1} - \frac{1}{v_2 - v_1}\right), \\[2mm]
		v_{2, tx} &\ds = \frac{v_{2, t} v_{2, x}}{v_2 - v_1}.
		\end{array}
	\label{ShY3}
\end{equation}
The most straightforward way to calculate the integrals of (\ref{ShY3}) 
is to use formula (\ref{J31J32}), in which we must substitute $u=1/v_1$ 
and eliminate the mixed derivatives by means of (\ref{ShY3}). 
However, the integrals can be calculated more efficiently by means of formula (\ref{intrecJ}).
Recall that $J_{n,0}=0$ and $J_{n,n}=1$ for any $n$.
Then, from formulae (\ref{lnwndif}) and (\ref{intrecJ}) we first obtain the auxiliary expression
\begin{equation}
	J_{2,1}=\Lambda_0=-\frac{v_{1,x}}{v_1}.
\end{equation}
Further, we have 
\begin{equation}
	\Lambda_1=\frac{v_{1,xx}}{v_{1,x}}-\frac{v_{1,x}}{v_1}-\frac{v_{2, x}}{v_2-v_1}
\end{equation}
hence the first integral is given by
\begin{equation}
	\begin{array}{ll}
J_{3,1}	&= -\partial_x J_{2,1}+\Lambda_1 J_{2,1} \\
		&\ds =  \frac{v_{1, x x}}{v_1}-\frac{v_{1, x}^2}{v_1^2} 
		 -\frac{v_{1, x}}{v_1}\left(\frac{v_{1, x x}}{v_{1, x}}-\frac{v_{1, x}}{v_1}-\frac{v_{2, x}}{v_2-v_1}\right) \\[4mm]
		& \ds =  \frac{v_{1, x} v_{2, x}}{v_1\left(v_2-v_1\right)}.
		\end{array}
	\label{w3int}
\end{equation}
Lastly, the integral $J_{3,2}$ reads
\begin{equation}
	\begin{array}{ll}
	J_{3,2}
	&= J_{2,1}  + \Lambda_1 J_{2,2} \\[2mm]
	&\ds = \frac{v_{1, xx}}{v_{1, x}} - \frac{2 v_{1, x}}{v_1} - \frac{v_{2, x}}{v_2 - v_1}.
		\end{array}
	\label{w3int2}
\end{equation}
Thus, the integrals $J_{3,1}$ and $J_{3,2}$ constitute a complete set of integrals for the system (\ref{ShY3}).
\end{Example}
\subsubsection*{Boundary condition $(a,b).$}
Let us quickly discuss how integrals can be re-calculated in the case of the boundary condition (\ref{habibconstr}).
Obviously, one can simply make the inverse transformation
\begin{equation}
	v_n=\frac{\vphi_n-a}{\vphi_n-b},
\label{mobiusshort}
\end{equation}
where the non-essential constant factor $a-b$ was suppressed.

However, instead of recalculating the integrals we would rather recalculate formula (\ref{lnwndif}) in variables $\vphi_n$.
Calculations, similar to (\ref{selfsub}) and (\ref{lnwndif}), yield the formula
\begin{equation}
	\begin{array}{ll}
		\Lambda_n &\ds =  -\frac{\varphi_{1, x}}{\varphi_1-a} -\frac{\varphi_{n, x}}{\varphi_n-b}+\frac{\varphi_{n+1, x}}{\varphi_{n+1}-b} \\[2mm]
		& \ds +\sum_{m=1}^n\left(\frac{\varphi_{m, x x}}{\varphi_{m, x}}-\frac{\varphi_{m+1, x}}{\varphi_{m+1}-\varphi_m}\right) 
		 +\sum_{m=1}^{n-1} \frac{\varphi_{m, x}}{\varphi_{m+1}-\varphi_m}.
		\end{array}
		\label{Lambdaphi}
\end{equation}

\begin{Example}{1.3}  Consider the system obtained from (\ref{ShY3}) by applying (\ref{mobiusshort})
\begin{equation}
	\begin{array}{l}
		\ds \vphi_{1, tx} = \vphi_{1, t} \vphi_{1, x} \left( \frac{1}{\vphi_1 - a} - \frac{1}{\vphi_2 - \vphi_1} \right), \\[3mm]
		\ds \vphi_{2, tx}  = \vphi_{2, t} \vphi_{2, x} \left( \frac{1}{\vphi_2 - \vphi_1} - \frac{1}{b - \vphi_2} \right).
		\end{array}
		\label{ShY3phi}
\end{equation}
Hence, we assume that
\begin{equation}
	\vphi_0=a,\ \ \vphi_3=b.
\end{equation}
Then, formula (\ref{Lambdaphi}) yields
\[
J_{2,1}=\Lambda_0=\frac{\varphi_{1, x}}{\varphi_1-b}-\frac{\varphi_{1, x}}{\varphi_1-a}.
\]
Further, from the same formula we obtain
\begin{equation}
		\Lambda_1  =\frac{\varphi_{2, x}}{\varphi_2-b}-\frac{\varphi_{1, x}}{\varphi_1-b}-\frac{\varphi_{1, x}}{\varphi_1-a} 
		+\frac{\varphi_{1, x x}}{\varphi_{1, x}}-\frac{\varphi_{2, x}}{\varphi_2-\varphi_1}.
\end{equation}
The integrals are given by
\begin{equation}
	\begin{array}{ll}
J_{3,1}&\ds = -\partial_x J_{2,1}+\Lambda_1 J_{2,1} \\[2mm]
		&\ds =\frac{(a - b)\varphi_{1, x} \varphi_{2, x} }{(\varphi_1 - a)(\varphi_2 - \varphi_1)(\varphi_2 - b)}
		\end{array}
	\label{w3intphi}
\end{equation}
and
\begin{equation}
	\begin{array}{ll}
	J_{3,2}
	&\ds = J_{2,1}  + \Lambda_1 J_{2,2}\\[2mm]
	&\ds = -\frac{2 \varphi_{1, x}}{\varphi_1 - a} + \frac{\varphi_{2, x}}{\varphi_2 - b} - \frac{\varphi_{2, x}}{\varphi_2 - \varphi_1} + \frac{\varphi_{1, xx}}{\varphi_{1, x}}
		.
		\end{array}
	\label{w3int2phi}
\end{equation}
\end{Example}
\begin{remark}
	System (\ref{ShY3}), given in the variables 
	\[v_1=-a/p,\ \  v_2=q\] 
	was shown to be Darboux integrable in \cite{demstarts}, in which its integrals and higher symmetries were described. 
	Later, the point-equivalent system (\ref{ShY3phi}) was used as an illustrating example in several papers (see e.g. \cite{habibsak,garhabib}).
\end{remark}
\subsection{General solutions}
The general solution of (\ref{wn0}) has a particularly simple form \cite{darb}, namely
\begin{equation}
	u(t,x)=\sum_{s=1}^{N-1}X_s(x)T_s(t),
	\label{gensolu}
\end{equation}
where $T_s(t)$ and $X_s(x)$ are arbitrary functions of their arguments. Thus, the number of arbitrary functions in it, $2N-2$,
matches the order of equation (\ref{wn0}).

Substituting (\ref{gensolu}) into (\ref{vdef}) we obtain the general solution of the lattice (\ref{ShY})
with the boundary condition (\ref{myboundary0N}):
\begin{equation}
	v_n=\frac{\mbox{det}\left(a_{ij}\right)}{\mbox{det}\left(b_{kl}\right)},
	\label{solnondegen}
\end{equation}
where the components of the determinants are given by
\begin{equation}
	\begin{array}{ll}
	a_{ij} & \ds =\sum_{s=1}^{N-1} X_s^{(i)} T_s^{(j)},\ \ b_{kl}= \sum_{s=1}^{N-1} X_s^{(k-1)} T_s^{(l-1)},\\
	& i,j =1,\dots,n-1,\ \ k,l=1,\dots,n.
	\end{array}
	\label{detcomp}
\end{equation}
and the derivatives are denoted as follows:
\begin{equation}
	X^{(\kappa)}_s=\frac{d^\kappa X}{dx^\kappa},\ \ T^{(\mu)}_s=\frac{d^\mu T}{dt^\mu}.
\end{equation}
Hence, formula (\ref{solnondegen}) provides us with the general solution of (\ref{ShY}) satisfying
condition (\ref{myboundary0N}).
In order to obtain the general solution satisfying the boundary condition (\ref{habibconstr}) one has to apply
transformation (\ref{mobiusab}). 

Let us consider the degenerate boundary condition
\begin{equation}
	v_0=v_N=a
	\label{degenboundary}
\end{equation}
in more detail. 
Although calculation of integrals in this case is straightforward, 
the general solution requires a more detailed analysis.

Firstly, we consider the homogeneous condition 
\begin{equation}
	v_0=v_{N}=0.
	\label{homogboundary}
\end{equation}
Comparing it with (\ref{vdef}) we see that 
\[w_{N-1}(u_{tx})=0.\] 
The general solution of the latter equation is given by
\begin{equation}
	u=X_0(x)+T_0(t)+\sum_{s=1}^{N-2} X_s T_s.
	\label{homogsol}
\end{equation}
Substituting (\ref{homogsol}) in (\ref{vdef}) and shifting the result, we get the general solution
of (\ref{ShY}) satisfying condition (\ref{degenboundary}) in the form
\begin{equation}
	v_n=\frac{\mbox{det}\left(\alpha_{ij}\right)}{\mbox{det}\left(\beta_{kl}\right)}+a,
	\label{soldegen}
\end{equation}
where
\begin{equation}
	\begin{array}{ll}
	\alpha_{ij} & \ds =\sum_{s=1}^{N-2} X_s^{(i)} T_s^{(j)},\\ 
	\beta_{kl} &\ds = \delta_{k1}T_0^{(l-1)}+\delta_{1l}X_0^{(k-1)}+\sum_{s=1}^{N-2} X_s^{(k-1)} T_s^{(l-1)},\\
	& i,j =1,\dots,n-1,\ \ k,l=1,\dots,n.
	\end{array}
	\label{detcompdegen}
\end{equation}
Here, $\delta_{kl}$ is the Kronecker delta:
\[
\delta_{k l}= \left\{
	\begin{array}{ll}
	1,&  k=l \\ 
	0,& k \neq l
	\end{array}
\right.
\]
\begin{Example}{1.4}  For system (\ref{ShY3}), formula (\ref{solnondegen}) gives a general solution in the following form
\begin{equation}
	\begin{array}{ll}
	v_1 & \ds =\frac{1}{T_1 X_1 + T_2 X_2},\\ 
	v_2 & \ds =\frac{T_1' X_1' + T_2' X_2'}{T_1 X_1 T_2' X_2' - T_1 X_2 T_2' X_1' - T_2 X_1 T_1' X_2' + T_2 X_2 T_1' X_1'}.
	\end{array}
\end{equation}
Applying (\ref{mobiusab}), we get a general solution of (\ref{ShY3phi}) in the form
\begin{equation}
	\begin{array}{ll}
	\vphi_1 &\ds  =a + \frac{a - b}{(a - b) (X_1 T_1 + X_2  T_2) - 1  },\\[3mm]
	 \vphi_2 &\ds =a+\frac{(a - b) \big(T_1' X_1' + T_2' X_2'\big)}{(a - b) \big(X_2 X_1'-X_1 X_2'\big) \big(T_2 T_1'-T_1 T_2'\big) - T_1' X_1' - T_2' X_2'}.
	\end{array}
	\end{equation}
Finally, the solution to the system (\ref{ShY3phi}) in the degenerate case $b=a$ is obtained from the formula (\ref{soldegen})
\begin{equation}
	\begin{array}{ll}
	\vphi_1&\ds =a+\frac{1}{T_1 X_1 + T_0 + X_0},\\ 
	\vphi_2&\ds =a+\frac{T_1' X_1'}{T_0 T_1' X_1' - T_1 T_0' X_1' + X_0 T_1' X_1' - X_1 T_1' X_0' - T_0' X_0'}.
\end{array}
\end{equation}
\end{Example}
\section{Conclusion}
An evident extension of the results presented in this paper concerns the discrete and 
semi-discrete versions of the FSY lattice (\ref{ShY}). 
To derive these, one can start with the discrete scalar equation (\ref{wn0}), 
where derivatives are replaced by shifts of discrete variables, 
and introduce new lattice variables in a similar manner, specifically using formula (\ref{vdef}). 
The subsequent calculations rely on the Jacobi identity (\ref{sylvester}) and largely mirror those presented in this paper.

The resulting equations should exhibit analogous properties. In particular, when the boundary condition (\ref{myboundary0N}) is imposed, they become Darboux integrable. Detailed aspects of integrability, such as characteristic integrals and general solutions, will be addressed in a separate publication.

Another direction, concerning finite reductions of the FSY lattice, involves finding exact solutions to integrable models with two spatial dimensions \cite{habibkhakim,garhabib}. We anticipate that the formulae for the integrals and general solutions provided in this paper will facilitate progress in addressing these problems.
\section*{References}
\end{document}